\theoremstyle{plain} \newtheorem{theorem}{Theorem}[section]  \newtheorem{corollary}[theorem]{Corollary}  \newtheorem{observation}[theorem]{Observation} 
\theoremstyle{definition}\newtheorem{definition}[theorem]{Definition} 
\newtheorem*{noncorollary}{Corollary}
\newenvironment{Invariant}[1]
  {\innercustomthm}
  {\endinnercustomthm}
\title{A Faster External Memory Priority Queue with DecreaseKeys}
\author{Shunhua Jiang \thanks{Institute for Interdisciplinary Information Sciences, Tsinghua University; jiangsh15@mails.tsinghua.edu.cn. Work done while staying at Aarhus University.}
\and Kasper Green Larsen \thanks{Department of Computer Science, Aarhus University; larsen@cs.au.dk.  Supported by a Villum Young Investigator Grant and an AUFF Starting
Grant.}}
\date{}
\begin{document}
\maketitle

\begin{abstract}
A priority queue is a fundamental data structure that maintains a dynamic set of (key, priority)-pairs and supports Insert, Delete, ExtractMin and DecreaseKey operations. In the external memory model, the current best priority queue supports each operation in amortized $O(\frac{1}{B}\log \frac{N}{B})$ I/Os. If the DecreaseKey operation does not need to be supported, one can design a more efficient data structure that supports the Insert, Delete and ExtractMin operations in $O(\frac{1}{B}\log \frac{N}{B}/ \log \frac{M}{B})$ I/Os. A recent result shows that a degradation in performance is inevitable by proving a lower bound of $\Omega(\frac{1}{B}\log B/\log\log N)$ I/Os for priority queues with DecreaseKeys. In this paper we tighten the gap between the lower bound and the upper bound by proposing a new priority queue which supports the DecreaseKey operation and has an expected amortized I/O complexity of $O(\frac{1}{B}\log \frac{N}{B}/\log\log N)$. Our result improves the external memory priority queue with DecreaseKeys for the first time in over a decade, and also gives the fastest external memory single source shortest path algorithm.
\end{abstract}

\section{Introduction}
\noindent The priority queue is a fundamental data structure that has many applications. For example, a priority queue can be used in Prim's algorithm to find a minimum spanning tree and it can be used in Dijkstra's algorithm to find single source shortest paths. A priority queue maintains a dynamic set $S$ of \emph{entries} of the form (key, priority). It supports the following basic operations:
\begin{itemize}
\item Insert$(k,p)$: Insert an entry with key $k$ and priority $p$ into $S$. 
\item Delete$(k,p)$: If the entry $(k,p)$ is present in $S$, delete the entry.
\item ExtractMin$()$: Remove and return the entry with minimum priority from $S$.
\item DecreaseKey$(k,p)$: Given key $k$ where there already exists an entry $(k,p')$ in $S$, the priority associated with $k$ is changed to $p$ if and only if $p<p'$. 
\end{itemize}
The priority queue is a well studied data structure. Let $N$ be the maximum size of $S$. The classic Fibonacci heap \cite{fredman1987fibonacci} can support the Insert operation in worst case $O(1)$ time, ExtractMin in amortized $O(\log N)$ time and DecreaseKey in amortized $O(1)$ time. When the priorities are only comparable, this time complexity is optimal because a priority queue can be used to implement comparison-based sorting which has a lower bound of $\Omega(N\log N)$. If we assume priorities are integers, a priority queue that uses the reduction to integer sorting \cite{thorup2007equivalence} can support each operation in deterministic amortized $O(\log\log N)$ time \cite{MR2121186} or expected amortized $O(\sqrt{\log\log N})$ time \cite{han2002integer}. 

\paragraph{External Memory Model.} 
In modern computers and database systems, often there is a small internal memory that we can read from and write to very quickly and a large external memory that requires much longer time to access. 
Usually the data is too large to fit into the internal memory. The random access to the external memory takes much longer time than CPU instructions and the performance of an algorithm mainly depends on the number of such random accesses it makes.

The external memory model introduced by Aggarwal and Vitter \cite{aggarwal1988input} captures the fact that random accesses to the external memory is the bottleneck of an algorithm.
In this model, there is a small main memory of $M$ words and a large disk which is divided into blocks of $B$ words. Each word has size $w=\Theta(\log N)$ bits. An I/O can transfer one block between the disk and the memory. The performance of an algorithm is measured in terms of the total number of I/Os it makes. In this model, the tight lower bound for comparison-based sorting is $\Omega(\frac{N}{B}\log \frac{N}{B}/ \log \frac{M}{B})$ I/Os \cite{aggarwal1988input}.

\paragraph{Priority Queues in External Memory.}
The priority queue is also an important data structure in the external memory model. In \cite{kumar1996improved} Kumar and Schwabe proposed an external memory tournament tree that supports the Insert, Delete, ExtractMin and DecreaseKey operations of a priority queue in amortized $O(\frac{1}{B}\log \frac{N}{B})$ I/Os, assuming that the keys are integers in $\{1, \dots, N\}$. This priority queue can be used in the external memory Dijkstra's algorithm also described in \cite{kumar1996improved} for Single Source Shortest Paths (SSSP). When running the algorithm on a graph $G(V,E)$, the vertices are labeled $1,\dots, |V|$ and the priority queue takes these labels as keys. The algorithm performs $O(|E|)$ ExtractMins and $O(|E|)$ DecreaseKeys on a priority queue and it costs $O(|V|)$ extra I/Os. In total it costs $O(|V|+\frac{|E|}{B}\log \frac{|V|}{B})$ I/Os using the external memory tournament tree. Hence an improvement to the external memory priority queue also leads to an improvement to SSSP in the external memory model for general graphs. 

It is also known that if DecreaseKeys do not need to be supported, there exist more efficient priority queues that run in amortized $O(\frac{1}{B}\log \frac{N}{B}/ \log \frac{M}{B})$ I/Os \cite{MR1984614}\cite{MR1693788}.
Note that a DecreaseKey operation does not know the original priority of the key, so it does not know if the new priority is smaller, hence it cannot be simulated straightforwardly by a Delete operation that deletes the original (key, priority)-pair and an Insert operation that inserts the new pair. This feature is crucial in some applications, e.g., when performing DecreaseKeys in the external memory Dijkstra's algorithm, there is no I/O-efficient way to know the original priority of a vertex.

The external memory tournament tree is a $\log \frac{M}{B}$ factor slower than priority queues without DecreaseKeys. A natural question is whether this extra cost is really necessary. Recently, this question is answered affirmatively in \cite{MR3678253} where Eenberg, Larsen and Yu proved a lower bound of amortized $\Omega(\frac{1}{B}\log B/\log\log N)$ I/Os per operation for priority queues with DecreaseKeys. When $B\geq N^{\varepsilon}$ and under the common tall cache assumption $M\geq B^{1+\varepsilon}$, the lower bound becomes $\Omega(\frac{1}{B}\log N/\log\log N)$, whereas the I/O complexity of priority queues without DecreaseKeys becomes $O(\frac{1}{B}\log \frac{N}{B}/ \log \frac{M}{B})=O(\frac{1}{B})$ for this setting of parameters. Thus there is a $\log N/\log\log N$ gap between the performance of priority queues with and without DecreaseKeys. This shows that indeed more I/Os are required to support DecreaseKeys. The lower bound allows Las Vegas randomization and it also holds for non-comparison-based priority queues. Moreover, it allows arbitrary computation on the bits of priorities and keys such that they do not need to be treated as atomic entities.

\paragraph{Our Result.}
There still exists a gap between the lower bound of $\Omega(\frac{1}{B}\log B/\log\log N)$ in \cite{MR3678253} and the performance of $O(\frac{1}{B}\log \frac{N}{B})$ in \cite{kumar1996improved} for priority queues with DecreaseKeys. In this paper we tighten the gap by proposing a better data structure:
\begin{theorem}\label{thm:main}
When the keys are integers in $\{1, \dots, N\}$, and the priorities are comparable and can be stored in one word of size $w=\Theta(\log N)$ bits, under the assumption that $M>B\cdot \log^{0.01}N$, there exists a priority queue that supports the Insert, Delete, ExtractMin and DecreaseKey operations in expected amortized $O(\frac{1}{B}\log \frac{N}{B}/\log\log N)$ I/Os.
\end{theorem}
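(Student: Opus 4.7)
The plan is to modify the Kumar--Schwabe external memory tournament tree \cite{kumar1996improved} so that its depth drops from $\Theta(\log(N/B))$ to $\Theta(\log(N/B)/\log\log N)$ while keeping the amortized cost of processing a signal at each level at $O(1/B)$. Concretely, I would build a tree over the key universe $\{1,\dots,N\}$ with fan-out $f=\Theta(\log^{c} N)$ for a small constant $c$ (chosen so that $f$ buffers of size $\Theta(B)$ still fit into the assumed memory $M>B\log^{0.01}N$); the leaves hold one key each and each internal node stores, for every child, the current minimum priority that the child's subtree holds, together with a ``signal buffer'' for pending Insert, Delete, ExtractMin, and DecreaseKey operations directed into its subtree.

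The next step is to describe the buffered flush. Every operation is launched as a signal at the root, labeled by its target key. When a node's buffer reaches $\Theta(f\cdot B)$ signals, I flush by sorting the signals by their target child in internal memory (the buffer is small enough because $f\cdot B\le M$), writing each child-bucket contiguously, and appending each bucket to the corresponding child's buffer in $O(B)$ I/Os. This charges $O(1/B)$ amortized I/Os per signal per level. Because the tree has depth $O(\log(N/B)/\log\log N)$, each signal survives at most that many flushes, giving the claimed bound, provided that the per-node ``logic'' (updating child minima, resolving DecreaseKeys against values already present, answering ExtractMins) is also done in the same amortized budget per signal. This is where integer sorting and the assumption that priorities fit in one word enter: at flush time, I can duplicate-eliminate by key using a hash table in internal memory, keep only the smallest priority per key, and update each child's recorded minimum in a single streaming pass.

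The hard part will be the interaction between DecreaseKey and ExtractMin under lazy propagation. A DecreaseKey$(k,p)$ signal does not know the old priority, so its effect on a node's stored per-child minima cannot be finalized until the signal actually meets older signals or the leaf. I would handle this by making the invariant at each node be one-sided: the stored minimum for a child is a valid upper bound on the true minimum of that subtree, and it is corrected upward only when an ExtractMin signal is flushed down, at which moment we know we are committed to removing the smallest element. ExtractMin is therefore itself processed as a signal that travels down a root-to-leaf path, triggering cascading flushes of all ancestors of its path (in the style of Kumar--Schwabe's ``fill'' operation). The amortization has to show that these on-demand flushes can be charged to the Inserts/DecreaseKeys that deposited the signals being flushed; because each signal pays once per level and every level is visited $O(\log(N/B)/\log\log N)$ times, this still fits in the budget.

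Finally I would argue correctness by an invariant that, after every operation, every signal currently in the tree lies in a buffer on the root-to-leaf path of its target key and that the stored per-child minima are consistent with the signals that have already been flushed past each node. Randomization would enter only through the hash table used during a flush to merge duplicate keys and to route signals, which is where the ``expected'' in the bound comes from; a standard universal hash family suffices, since only one flush's worth of signals needs to be hashed in internal memory at a time. The main obstacle I expect is precisely the DecreaseKey/ExtractMin interaction outlined above, because one must guarantee both that ExtractMin never returns a non-minimum element and that the cost of the auxiliary flushes it provokes is absorbable by past operations rather than charged to future ones.
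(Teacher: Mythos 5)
Your high-level idea of increasing the fan-out to $\Theta(\log^{c}N)$ to cut the depth to $O(\log(N/B)/\log\log N)$ matches the paper, but the architecture you build on top of it diverges from the paper's and the divergence is exactly where your argument has a gap. You store entries only at leaves and keep per-child minima at internal nodes; the paper instead stores $\Theta(tB)$ actual entries in \emph{every} node, with a heap-order invariant that forces the root's list to contain the smallest-priority entries. With that design ExtractMin is free (pop from the in-memory root), and a FillUp cascade of cost $O(t\cdot h)$ is only provoked once $\Theta(tB)$ entries have been removed from the root, so it amortizes to $O(h/B)$ per deletion. In your design every ExtractMin has to flush a root-to-leaf path, and a flush of a degree-$f$ node already costs $\Theta(f)$ I/Os just to touch all the children's buffers, \emph{even when those buffers are nearly empty}. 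That is $\Theta(f\cdot h)$ I/Os per ExtractMin with no guarantee that $\Omega(fB)$ recent Inserts or DecreaseKeys sit in those buffers to absorb the cost -- in Dijkstra the ratio of ExtractMins to Inserts is constant, not $1/(fB)$ -- so the ``charge to the deposited signals'' accounting does not close. Your one-sided per-child minimum invariant also does not resolve the routing ambiguity: because a DecreaseKey$(k,p)$ may hit a key that has already been deleted or already has priority below $p$, a lazily maintained per-child minimum can only be a loose upper bound, and following it blindly sends ExtractMin to dead ends and further unamortizable flushes; you flag this as the main obstacle but do not give a mechanism that avoids it.

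What the paper adds, and your proposal does not have a substitute for, is the mechanism that makes ``entries at internal nodes with heap order'' compatible with cheap signal routing: during a PushSignal of an internal node one must decide, for each Delete or Update signal, whether the target key is already stored in a given child's list. Reading all $f$ children's lists would cost $O(f^2)$ I/Os per $\Theta(fB)$-signal flush and wipe out the gain. The paper instead stores a Bloom filter replacement of size $O(B)$ words in each node summarizing which keys are present, so that loading the $f$ summaries costs only $O(f)$ I/Os, plus a size-$B$ ``todo buffer'' per node so that actually touching a child's list is deferred until $B$ signals have accumulated for it. The false positives of the filter ($\varepsilon=1/\log^3 N$) are handled by the marked-node bookkeeping and Invariants 1--7, and the assumption $M>B\log^{0.01}N$ is exactly what lets a node's own state plus its $t$ children's filters and todo buffers fit in memory. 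Without a component playing this role, I don't see how your scheme reaches the claimed $O(\frac{1}{B}\log\frac{N}{B}/\log\log N)$ bound for ExtractMin.
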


There has been no improvement to the external memory priority queue for over a decade until our result. For the natural setting of parameters where $B=N^{\varepsilon}$, our data structure is in fact optimal as it matches the lower bound of $\Omega(\frac{1}{B}\log B/\log\log N)$. A direct application of our data structure is to improve the I/O-complexity of external memory Dijkstra's algorithm to $O(|V|+\frac{|E|}{B}\log \frac{|V|}{B}/\log\log |V|)$ I/Os, which is the current best performance for SSSP in general~graphs. 

Also note that the assumptions that the keys are integers in $\{1,\dots, N\}$ and that the priorities can be stored in one word are also made in \cite{kumar1996improved}. The further assumption $M>B\cdot \log^{0.01}N$ is reasonable because if $\log\frac{M}{B} = o(\log\log N)$, our desired I/O complexity $O(\frac{1}{B}\log \frac{N}{B}/\log\log N)$ would surpass the lower bound of $\Omega(\frac{1}{B} \log \frac{N}{B}/\log \frac{M}{B})$ for comparison-based sorting in the external memory model. The number 0.01 in this assumption is set arbitrarily and can be replaced by any constant smaller than 1. Our solution makes use of hashing on the keys to create small summaries of the entries stored in a block on disk, but otherwise treats entries as atomic entities.

\paragraph{Related Work.}
The equivalence between sorting and priority queues in the external memory model is proved in \cite{wei2014equivalence}, which is analogous to the result in internal memory \cite{thorup2007equivalence}. However, the reduction from sorting algorithms to external memory priority queues in \cite{wei2014equivalence} only yields priority queues without DecreaseKeys. The priority queues without DecreaseKeys can also be made to run in worst case such that $B$ consecutive operations take $O(\log \frac{N}{B}/ \log \frac{M}{B})$ I/Os \cite{brodal1998worst}.

There exist extensive results both for data structures and lower bounds in the external memory model. One of the most well studied problem is the dictionary problem, where a dynamic set is maintained to support insertions, deletions and membership queries. The traditional solutions are the B-tree 
\cite{bayer1970organization} and the buffer tree \cite{MR1984614}. Many variants of the B-tree were also proposed to solve variants of the dictionary problem \cite{arge2003optimal}\cite{becker1996asymptotically}. A tight lower bound and an optimal data structure for the dictionary problem with integers is proposed in \cite{iacono2012using}. 

For Single Source Shortest Paths in sparse graphs, there is a faster algorithm~\cite{meyer2006efficient} which runs in $O(\sqrt{|E||V|/B}\log |V|)$ I/Os. This complexity is better than the $O(|V|+\frac{|E|}{B}\log \frac{|V|}{B})$ I/Os of external memory Dijkstra's algorithm when the graph is sparse and the $O(|V|)$ term becomes the dominant term. 
Similar to Dijkstra's algorithm, many other graph algorithms also have efficient analogues in the external memory model, e.g., BFS, DFS and minimum spanning tree  \cite{MR1321845}\cite{kumar1996improved}. These graph algorithms usually exploit the inherent parallelism of the external memory model and have a factor of $\frac{1}{B}$ in their I/O complexity.

A common variant of the external memory model is the cache-oblivious model proposed by Frigo et al. \cite{frigo1999cache}. In the cache-oblivious model, the parameters $M$ and $B$ are not explicitly known and the analysis of algorithms and data structures should hold for any $M$ and $B$. An advantage of the cache-oblivious model is that the analysis can be applied to all levels of a multi-level memory hierarchy. Analogous data structures are proposed for the cache-oblivious model, including search trees \cite{bender2000cache}\cite{brodal2002cache} and priority queues. In the cache-oblivious model the priority queues with \cite{chowdhury2004cache}\cite{brodal2004cache} or without DecreaseKeys \cite{MR2121151}\cite{brodai2002funnel} can both achieve the same I/O complexity as in the external memory model. In this paper we did not attempt to extend our priority queue to the cache-oblivious model.

\section{Data Structure}\label{sec:data}
\subsection{Previous Solution}
Our data structure is based on the external memory tournament tree in \cite{kumar1996improved}. It is convenient to give a brief introduction of this data structure first. It supports the following basic operations:
\begin{itemize}
\item Delete$(k)$:
If there is an entry with key $k$ stored in the data structure, delete the entry.
\item ExtractMin$()$:
Remove and return the entry with minimum priority from the data structure.
\item Update$(k, p)$:
If there already is an entry $(k, p')$ stored in the data structure, update it to $(k,p)$ if $p < p'$. If no entry with key $k$ is stored in the data structure, insert an entry $(k,p)$ into the data structure.
\end{itemize}
DecreaseKey and Insert operations can both be implemented using the Update operations. Also note that the data structure supports a stronger version of the Delete operation than the one described in the introduction, i.e., the Delete operation above does not need to know the priority of the entry being deleted.

The external memory tournament tree is a binary tree with $\frac{N}{M}$ leaves. Each node in the tree stores $\frac{M}{2}$ to $M$ entries. The leaves are numbered from 1 to $\frac{N}{M}$. Keys in the range $[(i-1)M+1, iM]$ are mapped to the $i$th leaf. Each entry is stored either in its corresponding leaf or in an ancestor of that leaf. 
The root is stored in memory and all the other nodes are stored on disk.

When performing an operation, a corresponding signal is first sent to the root. When a signal reaches a node, if the node contains the target entry of the operation, the correct changes are applied to the entries in that node, otherwise the signal should be propagated down to one of its children. To achieve amortized $O(\frac{1}{B}\log \frac{N}{B})$ I/Os for each operation, a buffering method is used: Each internal node has a signal buffer of size $M$ words where the signals to be propagated down are first stored in. When the signal buffer has accumulated $M$ signals, these signals are pushed down to the two children together. It should be noted that with the signal buffer, the ordering of the signals are still maintained. A signal that comes first is always stored in a lower node in the tree or in a position closer to the front of a signal buffer. Hence the signal buffers only delay the propagation of signals but do not affect the final result when the signals are all applied.

When pushing down the signals, the two children are loaded into memory and we check if a signal should be applied to the entries of the children. If not, it is put into the signal buffer of one of the children. In total this procedure costs $O(\frac{M}{B})$ I/Os if we do not consider the recursive calls to the procedures of pushing down signals at lower levels. These I/Os result in an amortized cost of $O(\frac{1}{B})$ I/Os for every signal that is involved. Since each signal can only go into one signal buffer at each level, the total amortized cost for each operation is $h\cdot \frac{1}{B}=O(\frac{1}{B}\log \frac{N}{B})$ I/Os where $h$ is the height of the tree. 

\subsection{Our Data Structure} \label{sec:property}
A natural idea to improve the external memory tournament tree in \cite{kumar1996improved} is to increase the degree of the tree to $t$ and thus decrease the height of the tree and the amortized I/O complexity by a $\log t$ factor. But note that after this change, loading all the children of a node into memory would cost $O(t\cdot \frac{M}{B})$ I/Os and thus pushing down the signals in a signal buffer would also cost $O(t\cdot \frac{M}{B})$ I/Os. Let $h =O(\log \frac{N}{B}/\log t)$ be the height of the new tree. The amortized cost becomes $h\cdot \frac{t}{B}=O(\frac{t}{B}\log \frac{N}{B}/\log t)$ I/Os and no improvement is achieved. To make this approach work, we store a small summary of the entries in each node so that we only need to read these small summaries into memory to decide which child a signal should be applied to. 

More concretely, our basic data structure is a static $t$-ary tree where $t = \log^{0.01}N$. Another difference is that in our data structure, each node stores $O(tB)$ entries instead of $O(M)$ entries.
 The height of the tree is $h=\log_{t}\frac{N}{tB} = O(\log\frac{N}{B}/\log\log N)$. We store a \emph{Bloom filter replacement}~\cite{MR2298337} in each node to record the keys stored in that node. A Bloom filter replacement supports both insertions and deletions and it can only make false positive errors, i.e., some keys may appear to be present in the Bloom filter replacement but actually they are not. In our data structure each Bloom filter replacement makes a false positive error with probability at most $\varepsilon = \frac{1}{\log^3N}$ and it is able to store up to $2tB$ keys. The results of \cite{MR2298337} ensures that such a Bloom filter replacement can be stored in $O(B)$ words. We delay a detailed description of the Bloom filter replacement to Section \ref{sec:bfr}. When pushing the signals in the signal buffer of a node down to its children, we don't need to read into memory all of the $O(t^2B)$ entries stored in the list of the children, instead, we only need to read into memory the $O(tB)$ words of all the Bloom filter replacements of the children. 

Since we do not load the actual entries of a node into memory, we cannot apply the signals to the entries of the children immediately, so we also store an extra \emph{todo buffer} of size $O(B)$ words in each node. When pushing the signals from a parent node down to its children, the signals that should be applied to a child are first stored in the todo buffer of that child. When the todo buffer has accumulated $B$ signals, these signals are applied to the entries together.

Each node of the tree stores several components with the following properties (see Figure \ref{fig}):
\begin{enumerate}
\item Each node of the tree has a list that can store $0$ to $2tB$ entries with different keys. The entries in the list are sorted from start to end in descending order according to priority.
\item The tree has $\lceil \frac{N}{tB}\rceil$ leaves. Denote the leaves as $l_1,l_2,\dots,l_{\lceil \frac{N}{tB}\rceil}$ from left to right. We define a function Leaf$(k)$ that maps a key $k$ in the range $\left[\left(i-1\right)tB+1,\, itB\right]$ to the $i$th leaf $l_i$. Each entry with key $k$ can only be stored in Leaf$(k)$, or in an ancestor of Leaf$(k)$.
\item Each internal node has a signal buffer with maximum $tB$ entries, and each node except the root has a todo buffer with maximum $B$ entries. The signal buffer stores signals to be pushed down to the children of the node and the todo buffer stores signals to be applied to the entries in the list of that node. The signals that come later are always appended at the end of a buffer. Each node also stores a Bloom filter replacement of size $O(B)$ words that records which keys are present in the list of the node with a false positive error rate of at most $\varepsilon = \frac{1}{\log^3N}$. Overall the size of each node is $O(tB)$ words.
\item Each node stores a boundary value which upper bounds the priorities of the entries in the list of that node (but is not necessarily the maximum priority). We use Boundary$(v)$ to denote the boundary value of node $v$. Initially all boundary values are set to $+\infty$. 
\item The root of the tree is stored in memory. Other nodes are stored on disk.
\end{enumerate}

\section{Invariants}
Our data structure maintains some invariants that we will use to prove the correctness of our data structure later in Section \ref{sec:correctness}. 
These invariants are maintained after each operation. It is convenient to define the invariants before introducing the operations because some details of the operations are designed to maintain the invariants.
We first define the following useful notions. 
\begin{definition}[Time Order]\label{def:time_order}
We define a \emph{time order} for all the signals and entries with the same key in the data structure. Intuitively the signals and entries in lower nodes are put into the data structure first and they should be processed first as well. The time order is defined by the following traversal of signals and entries: For a key $k$, we start from Leaf$(k)$ and move upward to the root. Inside each node, the signal buffer is first traversed, then the list and finally the todo buffer. Each buffer or list is traversed from start to end. If a signal/entry $\sigma_1$ is encountered before another signal/entry $\sigma_2$, we say that $\sigma_1$ precedes $\sigma_2$ in the time order. Figure \ref{fig} shows an illustration.
\end{definition}
\begin{definition}[Actual Priority]
We use Actual$(k,v)$ to denote the \emph{actual priority} of a key $k$ in a node $v$, which is the priority of $k$ in the list of $v$ after all signals in the todo buffer of $v$ are applied.

More formally, Actual$(k,v)$ is defined by the following algorithm: Initially Actual$(k,v)$ is set to $p$ if the list of $v$ contains an entry $(k,p)$ and to $+\infty$ if not. Then we go through the signals in the todo buffer of $v$ from start to end, i.e., follow the time order. If we encounter a Delete$(k)$ signal, Actual$(k,v)$ is updated to $+\infty$, and if we encounter an Update$(k,p')$ signal, Actual$(k,v)$ is updated to $p'$ if $p'$ is smaller than the current value of Actual$(k,v)$.

If Actual$(k,v)$ is finite, we also call the entry $(k,$ Actual$(k,v))$ an \emph{actual entry} stored in node $v$.
\end{definition}

\begin{definition}[Final Priority]
We use Final$(k,v)$ to denote the \emph{final priority} of a key $k$ up to a node $v$, which is the priority of $k$ after applying all signals of $k$ following the time order from Leaf$(k)$ up to $v$.

Formally, Final$(k,v)$ is defined by the following algorithm: If $v$ is a leaf, Final$(k,v)=$ Actual$(k,v)$. If $v$ is an internal node, initially Final$(k,v)$ is set to Final$(k,c)$ where $c$ is the child of $v$ that is on the path from Leaf$(k)$ to $v$. Then we go through the signals in the signal buffer of $v$ from start to end. If we encounter a Delete$(k)$ signal, Final$(k,v)$ is updated to $+\infty$, and if we encounter an Update$(k,p')$ signal, Final$(k,v)$ is updated to $p'$ if $p'$ is smaller than the current value of Final$(k,v)$. Finally, Final$(k,v)$ is set to Actual$(k,u)$ if Actual$(k,u)$ is smaller than the current value of Final$(k,v)$.
\end{definition}

\begin{figure}[t]
\centering
\begin{tikzpicture}
[treenode/.style={rectangle,inner sep=0pt,minimum height=3cm,minimum width=4cm},
rootnode/.style={rectangle,inner sep=0pt,minimum height=2cm,minimum width=4cm, label={\small \itshape root}},
list/.style={rectangle,inner sep=0pt,minimum height=0.3cm,minimum width=3.2cm,label={[xshift=-1.56cm, label distance=-2pt]\small \itshape list}},
todobuffer/.style={rectangle,inner sep=0pt,minimum height=0.3cm,minimum width=0.8cm, label={[xshift=0.22cm, label distance=-2pt]\small \itshape todo buffer}},
bloom/.style={rectangle,inner sep=0pt,minimum height=0.3cm,minimum width=0.8cm, label={[xshift=0.27cm, label distance=-2pt]\small \itshape other parts}},
signalbuffer/.style={rectangle,inner sep=0pt,minimum height=0.3cm,minimum width=1.6cm,label={[xshift=-0.06cm, label distance=-2pt]\small \itshape signal buffer}},
value/.style={rectangle,inner sep=0pt,minimum height=0.3cm,minimum width=0.4cm}],

\node (memory) at (5.6cm, 3.6cm) [thick, dotted, inner sep=0pt,minimum height=2.9cm,minimum width=16cm, label={[xshift=-7cm,yshift=-0.6cm]\small \itshape memory}, draw] {};
\node (disk) at (5.6cm, -0.05cm) [thick, dotted, inner sep=0pt,minimum height=4.4cm,minimum width=16cm, label={[xshift=-7.3cm,yshift=-0.6cm]\small \itshape disk}, draw] {};

\node (root) at (5.5cm,3.55cm) [rootnode, draw] {};
\node (rootlist) [below right = 0.55cm and 0.4cm of root.north west, anchor=north west, list, draw] {};
\node (rootsignal) [below right = 1.47cm and 0.4cm of root.north west, anchor=north west, signalbuffer, draw] {};
\node (rootboundary) [below right = 1.47cm and 3.2cm of root.north west, anchor=north west, value, label={[xshift=-0.3cm, label distance=-2pt]\small \itshape other parts}, draw] {};

\node (child1) at (0cm,0) [treenode, draw] {};
\node (child1todo) [below right = 0.61cm and 0.4cm of child1.north west, anchor=north west, todobuffer, draw] {};
\node (child1list) [below right = 1.52cm and 0.4cm of child1.north west, anchor=north west, list, draw] {};
\node (child1signal) [below right = 2.43cm and 0.4cm of child1.north west, anchor=north west, signalbuffer, draw] {};
\node (child1bloom) [below right = 0.61cm and 2.4cm of child1.north west, anchor=north west, bloom, draw] {};
\node (child1boundary) [below right = 0.61cm and 3.2cm of child1.north west, anchor=north west, value, draw] {};
\node (child1arrow1) [below right = 0.6cm and -3.5cm of child1] {};
\node (child1arrow2) [below right = 0.6cm and -2.5cm of child1] {};
\node (child1arrow3) [below right = 0.25cm and -2cm of child1] {......};
\node (child1arrow4) [below right = 0.6cm and -0.5cm of child1] {};

\node (child2) at (4.8cm,0) [treenode, draw] {};
\node (child2todo) [below right = 0.61cm and 0.4cm of child2.north west, anchor=north west, todobuffer, draw] {};
\node (child2list) [below right = 1.52cm and 0.4cm of child2.north west, anchor=north west, list, draw] {};
\node (child2signal) [below right = 2.43cm and 0.4cm of child2.north west, anchor=north west, signalbuffer, draw] {};
\node (child2bloom) [below right = 0.61cm and 2.4cm of child2.north west, anchor=north west, bloom, draw] {};
\node (child2boundary) [below right = 0.61cm and 3.2cm of child2.north west, anchor=north west, value, draw] {};
\node (child2arrow1) [below right = 0.6cm and -3.5cm of child2] {};
\node (child2arrow2) [below right = 0.6cm and -2.5cm of child2] {};
\node (child2arrow3) [below right = 0.25cm and -2cm of child2] {......};
\node (child2arrow4) [below right = 0.6cm and -0.5cm of child2] {};

\node (child3) at (8cm,0cm) [minimum height=3.1cm, minimum width=2cm, label={[xshift=0cm, yshift=-2cm]\small \itshape......}]{};

\node (child4) at (11.2cm,0) [treenode, draw] {};
\node (child4todo) [below right = 0.61cm and 0.4cm of child4.north west, anchor=north west, todobuffer, draw] {};
\node (child4list) [below right = 1.52cm and 0.4cm of child4.north west, anchor=north west, list, draw] {};
\node (child4signal) [below right = 2.43cm and 0.4cm of child4.north west, anchor=north west, signalbuffer, draw] {};
\node (child4bloom) [below right = 0.61cm and 2.4cm of child4.north west, anchor=north west, bloom, draw] {};
\node (child4boundary) [below right = 0.61cm and 3.2cm of child4.north west, anchor=north west, value, draw] {};
\node (child4arrow1) [below right = 0.6cm and -3.5cm of child4] {};
\node (child4arrow2) [below right = 0.6cm and -2.5cm of child4] {};
\node (child4arrow3) [below right = 0.25cm and -2cm of child4] {......};
\node (child4arrow4) [below right = 0.6cm and -0.5cm of child4] {};

\path[-] (root.225) edge (child1.north);
\path[-] (root.290) edge (child3.north);
\path[-] (root.320) edge (child4.north);
\path[-] (child1) edge (child1arrow1)
                edge (child1arrow2)
                edge (child1arrow4);
\path[-] (child2) edge (child2arrow2)
                edge (child2arrow4);
\path[-] (child4) edge (child4arrow1)
                edge (child4arrow2)
                edge (child4arrow4);
\draw[<-, ultra thick] (root) -- (child2.83);
\draw[<-, ultra thick] (child2) -- (child2arrow1);
\path [->, ultra thick] (child2.238)edge[out=175, in=215, looseness=2.3] (child2signal.west);
\path [->, ultra thick] (child2signal.west) edge (child2signal.east);
\path [->, ultra thick] (child2signal.east) edge[out=50, in=215, looseness=2.15] (child2list.west);
\path [->, ultra thick] (child2list.west) edge (child2list.east);
\path [->, ultra thick] (child2list.east) edge[out=40, in=215, looseness=1.35] (child2todo.west);
\path [->, ultra thick] (child2todo.west) edge (child2todo.east);
\path [->, ultra thick] (child2todo.east)edge[out=10, in=250, looseness=1.5] (child2.83);
\end{tikzpicture}
\caption{An illustration of the data structure. The arrows show the time order inside one node as defined in Definition \ref{def:time_order}.} \label{fig}
\end{figure}
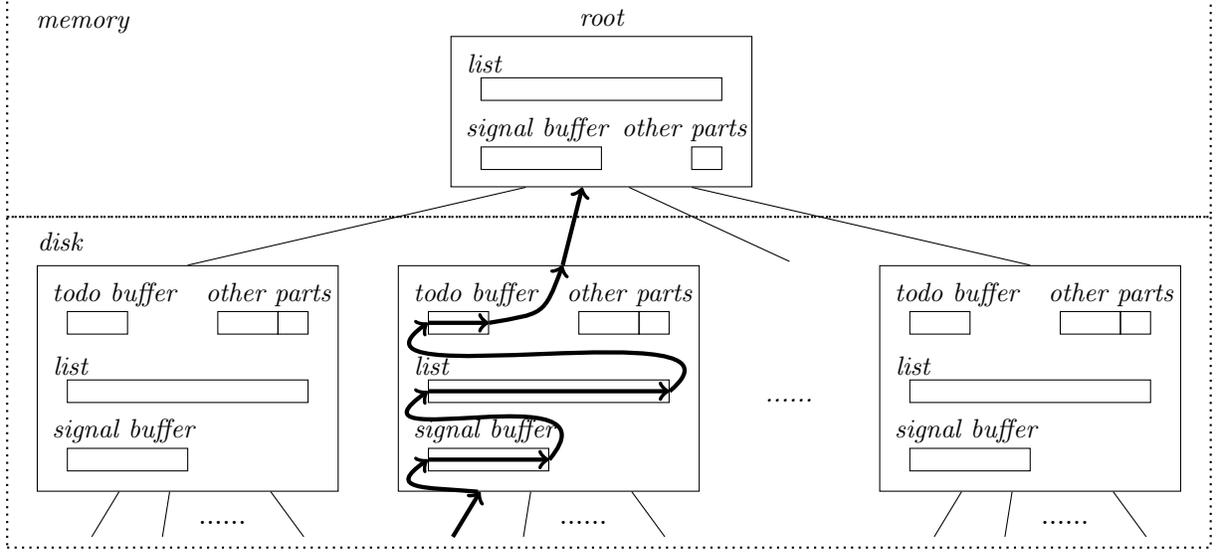

Now we are ready to describe our invariants. For each key $k$ we keep a set of \emph{marked nodes} $M_k$ that stores the nodes where a false positive error has occurred. Initially $M_k=\emptyset$ for every $k$. Some invariants are only defined for the nodes that are not marked. This is because when a Bloom filter replacement makes a false positive error, the restrictions of some invariants may be temporarily violated and they are restored in later procedures.  The notion of marked nodes is only used for analysis and we do not explicitly compute or store the marked nodes in our data structure.

Invariant 1 and 2 states that each key is uniquely stored in the tree with exception of the marked nodes. If there are multiple copies of a key, then there must exist Delete signals that will delete the redundant copies. 
We say that a finite Actual$(k,v)$ is deleted if there exist Delete signals in the signal buffers above it, and we say a finite Actual$(k,v)$ is unique if all the other finite Actual entries and Update signals below it are deleted by Delete signals in the signal buffers below it.
Invariant 1 and 2 essentially state that a finite Actual$(k,v)$ is either unique or deleted if $v$ is not marked.

\begin{Invariant}{1}[Uniqueness] $\forall k$, for any two nodes $u,v$ where $u$ is an ancestor of $v$ and $u\notin M_k$, if both Actual$(k,u)$ and Actual$(k,v)$ are finite, then there must exist a Delete$(k)$ signal in a signal buffer that is between the list of $u$ and the list of $v$ according to the time order.
\end{Invariant}

\begin{Invariant}{2}[Uniqueness] 
$\forall k$, for any node $v$ where $v\notin M_k$ and Actual$(k,v)$ is finite and for any signal buffer \emph{buf} that is lower than the list of $v$, if \emph{buf} contains an Update$(k,p)$ signal, 
then there must exist a Delete$(k)$ signal after the Update signal in \emph{buf} or in a signal buffer between \emph{buf} and the list of $v$.
\end{Invariant}

Invariant 3 states that the final priority of each key up to the root is the correct priority that the tree should store. 
\begin{Invariant}{3}[Correctness]
Let $S$ be the dynamic set that the priority queue maintains. $\forall k$, if there exists an entry $(k,p)$ in $S$, then Final$(k, root)=p$. Otherwise Final$(k, root)=+\infty$.
\end{Invariant}

Invariant 4 and 5 state that heap order is maintained in the tree. Together they state that the boundary values mark the boundaries of the ranges of the priorities of the actual entries in each node. Invariant 5 allows exception for the nodes in $M_k$.
\begin{Invariant}{4}[Heap Order] 
$\forall k$, for any two nodes $u,v$ where $u$ is an ancestor of $v$, we have Boundary$(v)\geq$ Boundary$(u)$, and Actual$(k,v)\geq$ Boundary$(u)$, and $p\geq$ Boundary$(u)$ for any Update$(k,p)$ signal in the signal buffer of $u$,
\end{Invariant}

\begin{Invariant}{5}[Heap Order]
$\forall k$, for any node $v$ where $v\notin M_k$ and Actual$(k,v)$ is finite, we have Actual$(k,v)\leq$ Boundary$(v)$.
\end{Invariant}

By definition of Final$(k,v)$, Final$(k,v)$ can only be updated by Actual$(k,v)$ or Update signals of key $k$ in the signal buffer of $v$, therefore we have the following corollary of Invariant 4:
\begin{noncorollary}
$\forall k$, for any two nodes $u,v$ where $u$ is an ancestor of $v$, we have Final$(k,v) \geq$ Boundary$(u)$. If Actual$(k,u)=+\infty$, we also have Final$(k,u) \geq$ Boundary$(u)$.
\end{noncorollary}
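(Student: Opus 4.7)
My plan is to unfold the recursive definition of Final along the path from Leaf$(k)$ up to $v$ and at each step apply the appropriate bullet of Invariant 4. A useful preliminary observation I would record first is that for \emph{any} node $w$ in the subtree rooted at $u$, every Update$(k,p)$ signal in the signal buffer of $w$ and every finite Actual$(k,w)$ is at least Boundary$(u)$. This follows by first applying Invariant 4 with $w$ in the role of the ancestor (to bound the local updates and actuals by Boundary$(w)$) and then chaining via Boundary$(w) \geq$ Boundary$(u)$, which is Invariant 4 applied to the pair $(u,w)$.

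For the first assertion I would induct on the length of the path from Leaf$(k)$ up to $v$. The base case $v = $ Leaf$(k)$ is immediate because Final$(k,v) = $ Actual$(k,v)$ and the preliminary observation applies. In the inductive step, let $c$ be the child of $v$ on the path from Leaf$(k)$ to $v$; induction supplies Final$(k,c) \geq$ Boundary$(u)$. I then walk through the construction of Final$(k,v)$ as given in the definition: the initial value Final$(k,c)$ already clears the bound; a Delete$(k)$ signal in the signal buffer of $v$ only raises Final to $+\infty$; each Update$(k,p)$ there replaces Final with some $p \geq$ Boundary$(u)$ by the preliminary observation; and the closing \mbox{min} with Actual$(k,v) \geq$ Boundary$(u)$ also respects the bound. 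Hence the inequality Final$(k,v) \geq$ Boundary$(u)$ survives the whole recursion. The second assertion, for the case $v=u$ with Actual$(k,u)=+\infty$, I would derive by applying the first assertion to the child $c$ of $u$ on the path from Leaf$(k)$ (now $u$ is a strict ancestor of $c$), yielding Final$(k,c) \geq$ Boundary$(u)$, and then running the same signal-buffer walk at $u$; the concluding comparison against Actual$(k,u)=+\infty$ is vacuous.

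The argument is essentially a single induction plus a bookkeeping pass through the definition of Final, so I do not anticipate any real obstacle. The only spot that needs a little care is making sure the update-signal bound holds at every \emph{intermediate} node along the path, not only at $u$ itself — that is precisely what the two-step preliminary observation provides, since Invariant 4 as stated bundles the update-signal bound with the pair $(u,v)$ only at the ancestor $u$.
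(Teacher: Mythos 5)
Your strategy --- unfold the recursive definition of Final along the path from Leaf$(k)$ to $v$ and check each ingredient against Invariant~4 --- is exactly the argument the paper compresses into its one-line remark preceding the corollary, and the induction plus signal-buffer walk you describe is correct. One small slip in the justification of your preliminary observation: to bound a finite Actual$(k,w)$ you invoke ``Invariant~4 with $w$ in the role of the ancestor'' and then chain through Boundary$(w)\geq$ Boundary$(u)$, but Invariant~4 with $w$ as the ancestor bounds Actual only at proper \emph{descendants} of $w$, not at $w$ itself, so the intermediate claim Actual$(k,w)\geq$ Boundary$(w)$ is not supplied by the invariant (indeed Invariant~5 points the other way). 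The conclusion you actually need, Actual$(k,w)\geq$ Boundary$(u)$ for $w$ a proper descendant of $u$, follows in a single step from Invariant~4 applied directly to the pair $(u,w)$; the two-step chaining via Boundary$(w)$ is only required for the Update-signal bound at intermediate nodes. Since your induction only ever uses the observation at proper descendants of $u$, and for $w=u$ the Actual term is handled by the hypothesis Actual$(k,u)=+\infty$, the corrected observation still covers every node you touch and the proof goes through.
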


Finally we also maintain two invariants for the todo buffers and the marked nodes.
\begin{Invariant}{6}[Todo Buffers]
$\forall k$, the todo buffer of each node can only store one Delete$(k)$ signal, or store one Update$(k,p)$ signal, or store one Delete$(k)$ signal followed by one Update$(k,p)$ signal.
\end{Invariant}

\begin{Invariant}{7}[Marked Nodes]
For any key $k$ and any node $v\in M_k$, the todo buffer of $v$ stores only one signal with key $k$ which is an Update$(k,p)$ signal where $p>$ Boundary$(k, v)$. The list of $v$ contains no entry with key $k$.
\end{Invariant}

\section{Operations}
We say a buffer or a list overflows when it has more than its maximum number of signals or entries.
To support the basic operations, we introduce the following auxiliary procedures:
\begin{itemize}
\item PushSignal$(v)$: When the signal buffer of $v$ overflows, this procedure is called to push down the signals to the children of $v$. It uses the Bloom filter replacement and the todo buffer of a child to decide if a signal should be applied to the actual entries of that child. If so, the signal should go to the todo buffer of that child. Otherwise the signal should go to the signal buffer of that child.
In our analysis, if an Update$(k,p)$ signal is wrongly put into the todo buffer of a child due to a false positive error, we put that child into $M_k$.
\item ApplyTodo$(v)$: When the todo buffer of $v$ overflows, this procedure is called to apply all the signals in the todo buffer to the entries in the list. After this procedure, the list of $v$ stores the actual entries of $v$.
In our analysis, after this procedure $v\notin M_k$ for any $k$.
\item EmptyList$(v)$: Push all excess entries in the list of $v$ to its children to restore the list to $tB$ entries. This procedure is called when some Update signals insert too many entries in a node.
\item FillUp$(v)$: Move up $tB$ entries from the children of $v$ to $v$ when the list of $v$ is empty. This procedure is called when some Delete signals remove too many entries from a node.
\end{itemize}

Now we describe the operations and the procedures in details.

\subsection{Delete$(k)$ Operation and ExtractMin$()$ Operation}
If there exists an entry with key $k$ in the root, the Delete$(k)$ Operation deletes it, and FillUp(root) is called if the deletion causes the list of the root to become empty. Otherwise a signal Delete$(k)$ is put into the signal buffer of the root, and PushSignal(root) is called if the signal buffer of the root overflows.

ExtractMin$()$ first returns the entry $(k,p)$ with the smallest priority among the list of the root, and then calls Delete$(k)$.

Since the root of the tree is stored in the memory, these two operations require no I/Os if we do not consider the calls to other procedures.

\paragraph{Analysis.} Invariant 1 and 2 hold because these two operations may only change Actual$(k,root)$ to $+\infty$ and they do not affect other actual priorities or create new Update signals. After these two operations Final$(k,root)$ becomes $+\infty$ and Invariant 3 also holds. Finally there are no changes to any todo buffer or any boundary values or any marked nodes, so Invariants 4, 5, 6 and 7 still hold.

\subsection{Update$(k, p)$ Operation}
If there exists an entry $(k, p')$ in the list of the root, it is updated to $(k,\min(p, p'))$. The order of the root list is adjusted to maintain the descending order according to priority.

If key $k$ is not present in the list of the root and $p>$ Boundary(root), an Update$(k, p)$ signal is put into the signal buffer of the root, and PushSignal(root) is called if the signal buffer overflows. 

Finally if $p\leq$ Boundary(root), in order to maintain the heap order, an entry $(k,p)$ is inserted into the list of the root. We then need to delete any other entry with key $k$ in the tree, so a Delete$(k)$ signal is put into the signal buffer of the root and PushSignal(root) is called if the signal buffer overflows. Inserting the entry $(k,p)$ may cause the list of the root to overflow. If this happens, EmptyList(root) is called to restore the list of the root to $tB$ entries. 

The Update$(k,p)$ operation also requires no I/Os if we do not consider the calls to other procedures.

\paragraph{Analysis.} In the first case where there already exists an entry $(k, p')$ in the list of the root, we simply update this entry and Actual$(k,\text{root})=(k,\min(p,p'))$, all the invariants hold trivially. 

In the second case where a signal Update$(k, p)$ is inserted into the signal buffer of the root, since we do not change any actual priority or any boundary value and we do not change the signals in any todo buffer or change any marked nodes, Invariant 1, 5, 6 and 7 hold. Since the only list above the new Update signal is the list of the root and in this case Actual$(k,\text{root})=+\infty$, Invariant 2 also holds. The new value of Final$(k,\text{root})$ is $p$ only if $p$ is smaller than the old value of Final$(k,\text{root})$ and this is the desired behavior of an Update signal, so Invariant 3 holds. Invariant 4 holds because in this case $p>$ Boundary(root).

In the final case where a new entry $(k,p)$ is inserted into the list of the root, since a new Delete signal is inserted into the signal buffer of the root, Invariant 1 and 2 hold. Since the old value of Actual$(k,\text{root})$ is $+\infty$ and using the corollary of Invariant 4 we know the old Final$(k,\text{root})$ value satisfies that Final$(k,\text{root})\geq$  Boundary(root). The new value of Final$(k,\text{root})$ is $p$ and $p\leq$ Boundary(root), so it is the correct priority for $k$ and Invariant 3 also holds. Invariant 4 and 5 hold because $p\leq$ Boundary(root). Finally Invariant 6 and 7 hold since no todo buffer and no marked node is changed.

\subsection{EmptyList$(v)$ Procedure}
This procedure is called when the list of $v$ has more than $2tB$ entries and the excess entries need to be put down to the children of $v$ to restore the list of $v$ to $tB$ entries. We first call ApplyTodo$(v)$ to empty the todo buffer, and then call PushSignal$(v)$ to empty the signal buffer. All components of $v$ and the todo buffers and the Bloom filter replacements of all the children of $v$ are loaded into memory. Loading them costs $O(t)$ I/Os in total.

When putting an entry $(k,p)$ to a child $c$, using Invariant 1 and the fact that the signal buffer of $v$ is empty, we know that Actual$(k,c)=+\infty$. Invariant 6 ensures that the todo buffer of $c$ contains at most one Delete signal with key $k$. The Delete signal is changed to an Update$(k,p)$ signal if it exists. Otherwise $(k,p)$ should be appended to the end of the list of $c$. We make one I/O when we have accumulated $B$ such entries. $k$ is inserted into the Bloom filter replacement of $c$ and deleted from the Bloom filter replacement of $v$. The insertion of entries may cause the list of $c$ to overflow. If this happens, EmptyList$(c)$ is called recursively. After this procedure, we set Boundary$(v)$ to be the maximum priority of the remaining entries in the list of $v$.

\paragraph{Analysis.} After ApplyTodo$(v)$ is called at the beginning of the procedure, $v\notin M_k$ for any $k$. Consider each entry $(k,p)$ that is moved from the list of $v$ to the child $c$. Before this procedure we have Actual$(k,v)=p$ and Actual$(k,c)=+\infty$, so using Invariant 7 we have $c\notin M_k$. After this procedure we have Actual$(k,v)=+\infty$ and Actual$(k,c)=p$. Since the signal buffer of $v$ is empty, the Delete signals of Invariant 1 and 2 that was used to make sure Actual$(k,v)$ is either unique or deleted can now be used for Actual$(k,c)$. So Invariant 1 and 2 still hold. Invariant 3 holds because after this procedure Final$(k,v)$ is still $p$. 
Boundary$(v)$ is decreased after this procedure. Using Invariant 5, the priorities of the remaining entries in the list of $v$ are between the new Boundary$(v)$ and the boundary values of the ancestors of $v$, so we still have Boundary$(v)\leq $ Boundary$(u)$ for any ancestor $u$ of $v$. Since the entries with largest priorities are moved down, we have $p \geq$ Boundary$(v)$. So Invariant 4 still holds. Before $(k,p)$ is moved down we have $p=$ Actual$(k,v)\leq$ Boundary$(v)\leq$ Boundary$(c)$, so after this procedure Actual$(k,c)=p\leq$ Boundary$(c)$. Invariant 5 still holds. The todo buffer of $c$ may be changed to contain only one Update$(k,p)$ signal, so Invariant 6 still holds. Finally this procedure itself does not change any marked node if we do not consider the calls to other procedures, so Invariant 7 still holds.

\subsection{FillUp$(v)$ Procedure}
When the list of $v$ is empty, we call this procedure to find the $tB$ entries with smallest priorities from the lists and the todo buffers of all children of $v$ and move them up to $v$. We first call ApplyTodo$(v)$ to empty the todo buffer, and then call PushSignal$(v)$ to empty the signal buffer. 

We load the $B$ entries with smallest priorities from the end of the list of each child and also delete them from the Bloom filter replacements of that child.  
When the loaded entries of a particular child have all been moved up, another $B$ entries are loaded from the list of that child. Removing entries from the list of a child $c$ may cause the list of $c$ to be empty. If this happens, FillUp$(c)$ is called recursively.
All components of $v$ and all todo buffers and all Bloom filter replacements of the children of $v$ are also loaded into memory. Loading them costs $O(t)$ I/Os in total.

We repeatedly find the entry $(k,p)$ with the smallest priority among all the loaded entries and all the Update signals in the todo buffers. When $(k,p)$ is found as an Update signal in the todo buffer of a child $c$, using Invariant 6 we know that there exists no Delete$(k)$ signal following it, so Actual$(k,c)=p$. The Update signal is changed to a Delete$(k)$ signal and any other signal with key k is removed from the todo buffer. 
When $(k,p)$ is found among the loaded entries from a child $c$, if there exists a Delete$(k)$ signal in the todo buffer of $c$, we simply delete the entry. Otherwise we move up the entry and remove the possible Update$(k,p')$ signal from the todo buffer of $c$.The entry $(k,p)$ is then inserted into the list of $v$ and $k$ is inserted into the Bloom filter replacement of $v$.

The above process is repeated until the list of $v$ has $tB$ entries. 
Finally we put the remaining loaded entries back to the list of the children and also insert their keys back to the Bloom filter replacements. Boundary$(v)$ is updated to be the maximum priority of the entries in the list of $v$.

\paragraph{Analysis.} 
After ApplyTodo$(v)$ is called at the beginning of the procedure, $v\notin M_k$ for any $k$. During the procedure no Update signal in the todo buffer of $c$ with larger priority than Boundary$(c)$ is considered, so using Invariant 7 we also have $c\notin M_k$ for each $(k,p)$ that is moved up from $c$. 

Consider each new $(k,p)$ entry in the list of $v$, this entry could be moved up from an Update$(k,p)$ signal in the todo buffer of $c$ or from an entry in the list of $c$. In both cases Actual$(k,c)=p$ and Actual$(k,v)=+\infty$ before this procedure, and Actual$(k,c)=+\infty$ and Actual$(k,v)=p$ after this procedure, Invariant 1 and 2 hold because the Delete signals that were used to make sure Actual$(k,c)$ is either unique or deleted can now be used for Actual$(k,v)$. Invariant 3 holds because after this procedure Final$(k,v)$ is still $p$. The new Boundary$(v)$ equals the maximum priority of the moved up entries. Using Invariant 5, we know that for any remaining actual entry $(k',p')$ in $c$, $p'\leq$ Boundary$(c)$, so the new Boundary$(v) \leq p'\leq$ Boundary$(c)$, so Invariant 4 still holds. Invariant 5 also holds because Boundary$(v)$ can only be increased. After this procedure the todo buffer of $c$ contains no signal with key $k$ or only one Delete$(k)$ signal, so Invariant 6 still holds. Finally this procedure itself does not change any marked node if we do not consider the calls to other procedures, so Invariant 7 still holds.

\subsection{PushSignal$(v)$ Procedure}
This procedure pushes the signals in the signal buffer of $v$ down to its children. All components of $v$ and all todo buffers and all Bloom filter replacements of the children of $v$ are first loaded into memory. Loading them costs $O(t)$ I/Os in total.

We first define a method CheckInActual$(k,c)$ which checks if a child $c$ contains an actual entry with key $k$ using only the Bloom filter replacement and the todo buffer of $c$. The method returns true if $k$ is in the Bloom filter replacement or if the todo buffer of $c$ contains an Update signal with key $k$. Using Invariant 6 we know that the Update signal in the todo buffer is not followed by any Delete signal. Otherwise the method returns false. Note that when CheckInActual$(k, c)$ returns true, it is guaranteed that Actual$(k,c)\neq +\infty$ with at most $\varepsilon$ probability of a false positive error. When CheckInActual$(k,c)$ returns false, we can safely assert that Actual$(k,c)= +\infty$ because the Bloom filter replacement makes no false negative errors.

We push down each signal in the signal buffer of $v$ from start to end. Note that we make one I/O when we have accumulated $B$ signals to be put into the same buffer. Consider a signal that should go to a child $c$. If $c$ is a leaf, then the signal always goes into the todo buffer of $c$.

Now consider when $c$ is an internal node. For a Delete$(k)$ signal, we first use CheckInActual$(k,c)$ to check if $v$ contains an actual entry with key $k$. Delete$(k)$ is put into the todo buffer of $c$ only if CheckInActual$(k,c)$ returns true, and in this case all other signals with key $k$ are removed from the todo buffer. Since a false positive error may occur, we always put a Delete$(k)$ signal into the signal buffer of $c$. In our analysis, if $c\in M_k$, after pushing down the Delete signal we remove $c$ from $M_k$.
 
For an Update$(k, p)$ signal, if $p\leq$ Boundary$(c)$, then no matter whether $k$ is present in the actual entries of $c$ or not, the Update signal should go into the todo buffer of $c$. If there already exists another Update$(k,p')$ signal in the todo buffer of $c$, we simply update its priority to $\min(p,p')$. We also put a Delete$(k)$ signal into the signal buffer of $c$ to delete any other entry with key $k$ in the tree. In our analysis, if $c\in M_k$, after pushing down the Update signal we remove $c$ from $M_k$. 

If $p>$ Boundary$(c)$, we again use CheckInActual$(k,c)$ to check if Actual$(k,v)\neq +\infty$. If it returns true, the Update signal should go into the todo buffer of $c$. If there already exists another Update$(k,p')$ signal in the todo buffer of $c$, we simply update its priority to $\min(p,p')$. In the analysis we put $c$ into $M_k$ if CheckInActual$(k,c)$ makes a false positive error and there is no entry with key $k$ in the list of $c$.  
Finally if $p>$ Boundary$(c)$ and CheckInActual$(k,c)$ returns false, there can be no error and Update$(k, p)$ is put into the signal buffer of $c$.

During this procedure, PushSignal$(c)$ is called whenever the signal buffer of $c$ overflows and ApplyTodo$(c)$ is called whenever the todo buffer of $c$ overflows.

\paragraph{Analysis.} We prove that after pushing down each signal, all the invariants hold. First consider a Delete$(k)$ signal. After pushing down this signal, Actual$(k,c) = +\infty$ and the last signal in the signal buffer of $c$ is a Delete$(k)$ signal. This Delete$(k)$ signal effectively deletes any actual entry with key $k$ or any Update$(k,p)$ signal under $v$. So Invariant 1 and 2 still hold. 
Since Final$(k,c)=+\infty$ after the Delete signal is pushed down, Invariant 3 still holds.
Since no boundary value is changed and the only change to any actual priority is to change Actual$(k,c)$ to $+\infty$, Invariant 4 and 5 still hold. The todo buffer of $c$ is either unchanged or updated to contain only one Delete signal with key $k$, so Invariant 6 still holds. Finally, Invariant 7 holds since we do not mark new nodes and $c\notin M_k$ after pushing down the Delete signal.

Next consider pushing down an Update$(k,p)$ signal to the child $c$. Call this signal $\sigma$. If $p\leq$ Boundary$(c)$, after putting $\sigma$ into the todo buffer of $c$, Actual$(k,c)$ is finite. There is a newly placed Delete$(k)$ signal in the signal buffer of $c$ right after it. And the Delete signal of Invariant 2 that deletes $\sigma$ when it was in the signal buffer of $v$ can still delete the new actual entry. So Invariant 1 still holds. There is no new Update signals in any signal buffer, so Invariant 2 still holds. Final$(k,v)$ remains the same as before, so Invariant 3 still holds. 
No boundary value is changed and $p>$ Boundary$(v)$ using Invariant 4, so Actual$(k,c)\geq$ Boundary$(v)$ and Invariant 4 still holds. Invariant 5 holds because $p\leq$ Boundary$(c)$. Since we remove any other existing Update signal with key $k$ from the todo buffer of $c$, Invariant 6 still holds. Finally, Invariant 7 holds since we do not mark new nodes and $c\notin M_k$ after pushing down $\sigma$.

If $p>$ Boundary$(c)$ and CheckInActual$(k,c)$ returns false, then we are sure that Actual$(k,c)=+\infty$ before and after $\sigma$ is pushed down. If there exists a Delete$(k)$ signal that was used to effectively delete $\sigma$ before it is pushed down, the Delete signal can still effectively delete $\sigma$ after it is pushed down to the signal buffer of $c$, so Invariant 2 still holds.
Other invariants hold since there is no changes to any actual priority or any boundary value or any todo buffer or any marked node.

If $p>$ Boundary$(c)$ and CheckInActual$(k,c)$ returns true, if the Bloom filter replacement makes no false positive error, if $c$ is already marked, then the Update signal in the todo buffer of $c$ is updated and all Invariants still hold. If $c$ is not marked, then using Invariant 5 we know that before pushing down $\sigma$ we have Actual$(k, c)\leq$ Boundary$(c) < p$, so after putting $\sigma$ into the todo buffer of $c$, Actual$(k,c)$ remains the same. So all Invariants still hold. 

If the Bloom filter replacement makes a false positive error, then no entry or signal with key $k$ was stored in the list or the todo buffer of $c$ and $\sigma$ is wrongly placed into the todo buffer of $c$. In our analysis, $c$ is put into $M_k$. Actual$(k, c) = +\infty$ before $\sigma$ is pushed down, and Actual$(k, c) = p$ after $\sigma$ is pushed down. Since $c\in M_k$, we no longer require that there is a Delete signal that follows the list of $c$. And the Delete signal of Invariant 2 that deletes $\sigma$ when it was in the signal buffer of $v$ can still delete the new actual entry. Hence Invariant 1 still holds. 
Invariant 2, 5, 6 and 7 hold trivially.
Before $\sigma$ is put down, let $p'=$ Final$(k, c)$, and when computing Final$(k, v)$, when reaching the signal just after $\sigma$ in the signal buffer of $v$, the value of Final$(k, v)$ is $\min(p, p')$. After $\sigma$ is put down, Actual$(k, c)=p$ and Final$(k,c) = \min(p,p')$. So when reaching the signal just after $\sigma$ in the signal buffer of $v$, Final$(k, v)$ remains to be $\min(p, p')$. Hence Invariant 3 still holds. 
Since no boundary value is changed and now Actual$(k, c)=p>$ Boundary$(v)$, so Invariant 4 still holds.

\subsection{ApplyTodo$(v)$ Procedure}
If $v$ is the root, $v$ has no todo buffer and we update Boundary$(v)$ to be the maximum priority in the list. If $v$ is an internal node whose todo buffer is empty, the procedure returns without doing anything. Otherwise, applying the signals of the todo buffer is similar to what is done to the root by the Delete and Update operations. All components of $v$ are first loaded into memory. Loading them costs $O(t)$ I/Os in total.

Consider each signal in the todo buffer of $v$ from start to end. If the signal is Delete$(k)$, then any existing entry with key $k$ is deleted from the list of $v$. 

If the signal is Update$(k,p)$, we first check if the Bloom filter replacement has made a false positive error in the PushSignal procedure and this signal is put here by mistake. Using Invariant 7 we know that a false positive error occurs only when $p>$ Boundary$(v)$ and no entry with key $k$ is present in $v$. In this case the todo buffer of $v$ only has this one signal with key $k$. The Update$(k,p)$ signal is put back into the signal buffer of $v$ and in our analysis we remove $v$ from $M_k$. PushSignal$(v)$ is called if the signal buffer of $v$ overflows.
If no error occurs and there exists an entry $(k,p')$ in $v$, this entry is updated to $(k,\min(p,p'))$. Finally if no entry with key $k$ is present in the entries of $v$ and $p\leq$ Boundary$(v)$, an entry $(k,p)$ is inserted into the list of $v$.

In the end the list of $v$ is sorted in descending order according to the priorities and we rebuild the Bloom filter replacement of $v$. Boundary$(v)$ is updated to be the maximum priority in the list of $v$. EmptyList$(v)$ is called if the list of $v$ overflows and FillUp$(v)$ is called if the list of $v$ is empty. 

\paragraph{Analysis.} After the ApplyTodo procedure the todo buffer of $v$ is emptied and $v$ is removed from all $M_k$, so Invariant 6 and 7 hold. In the end of the procedure Boundary$(v)$ is updated to be the maximum priority of entries stored in the list of $v$, so Invariant 5 still holds.

For any key $k$, if the Bloom filter replacement did not make any false positive error for signals with $k$, then after this procedure Actual$(k,v)$ remains the same and no new signal is put into the signal buffer of $v$, so invariants 1, 2, 3 still hold. Since only entries with smaller priority than the old Boundary$(v)$ are put into the list of $v$, Boundary$(v)$ is only decreased and Invariant 4 still holds. 

Now consider any key $k$ where $v\in M_k$. After the ApplyTodo$(v)$ procedure, Actual$(k,v)$ becomes $+\infty$ and Invariant 1 still applies to old pairs of nodes. An Update$(k,p)$ is put into the signal buffer of $v$, since before this procedure, Actual$(k,v)=p$, for any ancestor $u$ of $v$ such that $u\notin M_k$, using Invariant 1 we know that if Actual$(k,u)\neq +\infty$, then there exist Delete signals in the signal buffers between the list of $v$ and the list of $u$. So after this procedure, there exist Delete signals in the signal buffers between the new Update$(k,p)$ signal and the list of $u$. So Invariant 2 still holds. After this procedure, Final$(k,v)$ remains to be $\min(p,$ Final$(k,c))$ where $c$ is a child of $v$, so Invariant 3 still holds. Since Boundary$(v)$ is only decreased after this procedure and $p>$ old Boundary$(v)$ using Invariant 7, so Invariant 4 still holds.

\section{Analysis}\label{sec:analysis}
In this section we provide a full analysis of our data structure. We first prove the properties of the Bloom filter replacement that we stated in Section \ref{sec:data}. Then we use the invariants to prove the correctness of the data structure. Finally we use a credit argument to show that the I/O complexity of the data structure is $O(\frac{1}{B}\log \frac{N}{B}/\log\log N)$. Together, Theorem \ref{thm:correctness} and Theorem \ref{thm:IO} imply our main result of Theorem \ref{thm:main}.

\subsection{Bloom Filter Replacement}\label{sec:bfr}
We first state the following theorem of \cite{MR2298337}.
\begin{theorem}\label{theorem:bfr}
Let $n\in \mathbb{N}$ and $\varepsilon> 0$, and let $w$ be the word size. We can maintain a data structure (Bloom filter replacement) for a dynamic multiset $S$ of size at most n, whose elements are from $\{0,1\}^w$, such that:
\begin{itemize}
\item Inserting in S and deleting from S can be done in amortized expected constant time. 
\item The data structure can only make false positive errors with probability at most $\varepsilon$, i.e., if $x\in S$, the data structure always answers correctly and if $x\notin S$, the data structure may wrongly answer that $x$ is in $S$ with probability at most $\varepsilon$.
\item The space usage is at most $\left(1 + o\left(1\right)\right) n \log \frac{1}{\varepsilon} + O\left(n + w\right)$ bits.
\end{itemize}
\end{theorem}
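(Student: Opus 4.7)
The plan is to reduce the problem to storing short fingerprints in a compact dynamic dictionary. Choose a universal hash family $h : \{0,1\}^w \to \{0,1\}^r$ with $r = \lceil \log(n/\varepsilon) \rceil$, and maintain the multiset $H = \{h(x) : x \in S\}$ in a dictionary that supports insertion, deletion, and membership in amortized expected $O(1)$ time. Insertions and deletions on $S$ are implemented by performing the corresponding operation on $h(x)$ in $H$; a membership query for $y$ returns true iff $h(y) \in H$.

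For correctness of the error bound, observe that if $y \in S$ then $h(y) \in H$ trivially, so no false negatives occur. If $y \notin S$, then for each $x \in S$ the probability that $h(y) = h(x)$ is at most $2^{-r} \leq \varepsilon / n$ by universality. A union bound over the at most $n$ elements of $S$ gives false positive probability at most $\varepsilon$, as required. The time bound for operations is immediate from the assumed performance of the dictionary plus constant time to evaluate $h$.

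For the space bound, the crux is using a dictionary that stores $n$ elements from a universe of size $2^r = O(n/\varepsilon)$ in $(1+o(1)) n \log(2^r/n) + O(n) = (1+o(1)) n \log(1/\varepsilon) + O(n)$ bits while still supporting constant-expected-time updates. A standard recipe is to use quotienting: split the $r$-bit fingerprint into a quotient of $\lceil \log n \rceil$ bits, which indexes into a hash table of size $\Theta(n)$, and a remainder of $r - \lceil \log n \rceil \approx \log(1/\varepsilon)$ bits, which is the only part explicitly stored in each slot. To avoid $\Theta(n \log n)$ bits of overhead from pointer-based collision resolution, one plugs in a succinct dynamic hash table such as the one built from dynamic perfect hashing with compressed displacement arrays, or from two-level schemes based on cuckoo hashing, combined with a global table storing the single hash-function description in $O(w)$ bits.

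The main obstacle is the $(1 + o(1))$ leading constant in the space bound; achieving $O(n \log(1/\varepsilon))$ bits is routine, but squeezing out redundancy to within a $1 + o(1)$ factor of the information-theoretic lower bound while keeping updates in amortized expected $O(1)$ time is the delicate part and is precisely the contribution of \cite{MR2298337}. I would therefore treat the existence of such a succinct dynamic dictionary as the black box produced by that paper, and check only that the reduction above preserves both the time bound (one dictionary operation per data-structure operation, plus $O(1)$ hashing) and the space bound (the extra $O(w)$ term absorbs the cost of storing $h$).
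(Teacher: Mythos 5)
The paper does not actually prove this theorem: it is stated verbatim as an imported result, introduced by the single sentence ``We first state the following theorem of \cite{MR2298337},'' and the corollary that follows is the only thing the paper argues. So there is no internal proof to compare your attempt against.

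That said, your sketch correctly reproduces the architecture of the argument in the cited work: map each element to an $r$-bit fingerprint with $r = \lceil\log(n/\varepsilon)\rceil$ so that a union bound over the at most $n$ distinct members of $S$ gives false-positive probability at most $\varepsilon$ and no false negatives, then store the multiset of fingerprints in a space-optimal dynamic dictionary, using quotienting so the per-entry cost is $\approx \log(1/\varepsilon)$ rather than $r$ bits. Your space accounting is right (the $O(w)$ term absorbs the hash-function description, the $O(n)$ term absorbs rounding slack from the ceiling in $r$), and you correctly flag that the load-bearing ingredient — a dynamic dictionary achieving $(1+o(1))$ times the information-theoretic minimum with amortized expected $O(1)$ updates — is the technical heart of \cite{MR2298337} and cannot be conjured by a routine argument. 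Since the paper itself defers entirely to that reference, black-boxing that dictionary is exactly the level of rigor the paper applies here; your proposal is a faithful expansion of the citation, not a competing proof, and is sound at that level. One caveat worth keeping in mind (which affects the cited result as well, not just your sketch): the $\varepsilon$ bound is over the random choice of hash function for a fixed query, so if queries are chosen adaptively based on prior answers, the per-query guarantee does not immediately compose; the surrounding paper's analysis sidesteps this by taking expectations over the randomness once and paying for errors when they occur, which is compatible with this guarantee.
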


The following corollary follows directly and this corollary shows the correctness of the properties of Bloom filter replacement that we stated in Section \ref{sec:data}:
\begin{corollary}
A Bloom filter replacement is stored in each node to record the keys of the entries of the node and it can support both insertion and deletion of keys. Each Bloom filter replacement can only make false positive errors with probability at most $\varepsilon=\frac{1}{\log^3 N}$ and can be stored in $O(B)$ words.
\end{corollary}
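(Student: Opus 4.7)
The plan is to apply Theorem \ref{theorem:bfr} directly with the parameters dictated by our data structure and verify that the resulting space bound collapses to $O(B)$ words. The corollary is essentially a parameter substitution: each node's list holds at most $2tB$ entries (by property 1 in Section \ref{sec:property}), so I would instantiate Theorem \ref{theorem:bfr} with $n = 2tB$, $\varepsilon = 1/\log^3 N$, and $w = \Theta(\log N)$, and treat the multiset $S$ as the current set of keys stored in the node's list. The insertion/deletion support and the false positive guarantee then follow immediately from the first two bullets of the theorem.

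The only computation to carry out is the space bound. Substituting into $(1+o(1))\,n\log(1/\varepsilon) + O(n+w)$ bits, I get $\log(1/\varepsilon) = 3\log\log N$, so the dominant term is at most $O(tB\log\log N) = O(B\log^{0.01}N \cdot \log\log N)$ bits. Since $\log^{0.01}N \cdot \log\log N = o(\log N)$, dividing by the word size $w = \Theta(\log N)$ yields $o(B)$ words. The additive $O(n+w) = O(B\log^{0.01}N + \log N)$ bits similarly reduces to $O(B)$ words (in fact $o(B) + O(1)$). Summing, the whole Bloom filter replacement fits in $O(B)$ words as claimed.

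Thus the proof is essentially a one-line invocation of Theorem \ref{theorem:bfr} followed by this arithmetic check. There is no real obstacle; the only thing to be slightly careful about is that the assumption $M > B\log^{0.01}N$ from Theorem \ref{thm:main} is what makes the choice $t = \log^{0.01}N$ feasible (so that a node of size $O(tB)$ still fits in internal memory together with the $O(t)$ Bloom filter replacements and todo buffers loaded by the auxiliary procedures), but this is not needed for the corollary itself — it is only needed when the corollary is applied elsewhere in the analysis. I would state the substitution, present the two-line bit-to-word calculation, and conclude.
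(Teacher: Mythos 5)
Your proof is correct and takes essentially the same route as the paper: instantiate Theorem \ref{theorem:bfr} with $n=2tB$, $\varepsilon=1/\log^3 N$, and $w=\Theta(\log N)$, then verify that $(1+o(1))\,2tB\cdot 3\log\log N + O(2tB+w)$ bits divides down to $O(B)$ words since $\log^{0.01}N\cdot\log\log N = o(\log N)$. Your extra remark about the $M > B\log^{0.01}N$ assumption being relevant only downstream, not to this corollary, is accurate and harmless.
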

\begin{proof}
Each Bloom filter replacement stores at most $2tB$ keys where $t = \log^{0.01}N$. From Theorem \ref{theorem:bfr} it follows that a Bloom filter replacement can support both insertion and deletion of keys correctly. Each Bloom filter replacement can be stored in $\left(1 + o\left(1\right)\right) 2tB \log \frac{1}{\varepsilon} + O\left(2tB + w\right)=O(2tB \log \frac{1}{\varepsilon}  + w) = O(\log^{0.01}N  \log\log N \cdot B+ w)$ bits. Since the word size $w = \log N$, each Bloom filter replacement can be stored in $O(B)$ words.
\end{proof}

\subsection{Correctness}\label{sec:correctness}
We use the invariants to prove the correctness of the data structure: 
\begin{theorem}[Correctness]\label{thm:correctness}
The ExtractMin operation always returns the correct entry.
\end{theorem}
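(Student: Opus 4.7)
The plan is to show that the entry $(k^*, p^*)$ returned by ExtractMin — namely, the minimum-priority entry in the root's list — is indeed a minimum element of the dynamic set $S$. By Invariant 3, for every key $k$ the priority of $k$ in $S$ (or $+\infty$ if $k \notin S$) equals Final$(k, \text{root})$, so the theorem reduces to verifying two claims: (a) Final$(k^*, \text{root}) = p^*$, and (b) Final$(k, \text{root}) \geq p^*$ for every key $k$. From (a) it follows that $(k^*, p^*) \in S$, and from (b) no other element of $S$ has smaller priority.

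For both claims, I would unfold the definition of Final at the root. Because the root has no todo buffer, Actual$(k^*, \text{root}) = p^*$, so Invariant 5 forces Boundary$(\text{root}) \geq p^*$. By the corollary of Invariant 4, for any child $c$ of the root and any key $k$, Final$(k, c) \geq$ Boundary$(\text{root}) \geq p^*$, which bounds the value used to initialize Final$(k, \text{root})$. Next I process the root's signal buffer in time order: any Update$(k,q)$ in it satisfies $q \geq$ Boundary$(\text{root}) \geq p^*$ by Invariant 4, and each Delete signal merely sets the running value to $+\infty$, so a straightforward induction on the signals shows the running value stays $\geq p^*$ throughout. The final step takes the minimum with Actual$(k, \text{root})$.

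Case (a) is then immediate: Actual$(k^*, \text{root}) = p^*$, so Final$(k^*, \text{root}) = p^*$ exactly. Case (b) splits into two subcases: if Actual$(k, \text{root}) = +\infty$ the running-value bound already gives Final$(k, \text{root}) \geq p^*$; if Actual$(k, \text{root})$ is finite, then $k$ sits in the root's list and hence Actual$(k, \text{root}) \geq p^*$ by the very definition of $p^*$ as the minimum priority in that list. Combining with Invariant 3 yields the theorem. The main obstacle is verifying that Final$(k^*, \text{root})$ is exactly $p^*$ and not strictly smaller — that is, that no Update signal lingering in the root's signal buffer can drag it below $p^*$. This is precisely the interplay of Invariants 4 and 5: Invariant 5 pins Boundary$(\text{root})$ above $p^*$ via the root's own list, and Invariant 4 forces every Update in that signal buffer, as well as every Final value propagated from below, to respect that boundary. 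Note also that the root is never marked (marking happens only on children during PushSignal), so no exceptions from $M_k$ complicate the argument.
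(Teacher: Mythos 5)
Your proposal is correct, and it takes a genuinely different route from the paper for the first half of the argument. Both proofs reduce to the same two claims via Invariant~3: that Final$(k^*,\text{root})=p^*$, and that Final$(k,\text{root})\geq p^*$ for every $k$. For the first claim, however, the paper appeals to the uniqueness Invariants~1 and~2: it argues that every finite Actual value and every Update signal strictly below the root's list is followed (in time order) by a Delete, so the running value when one reaches the root's list is $+\infty$, and Final$(k^*,\text{root})$ collapses to Actual$(k^*,\text{root})=p^*$. You instead derive the same equality purely from the heap-order Invariants~4 and~5: Invariant~5 pins $p^*\leq\text{Boundary}(\text{root})$, the corollary of Invariant~4 keeps Final$(k^*,c)$ and all Updates in the root's signal buffer at or above Boundary(root), so the running value never drops below $p^*$ and the final $\min$ with Actual$(k^*,\text{root})$ yields exactly $p^*$. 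Your approach is more economical (it never invokes Invariants~1, 2, 6, or~7) and more uniform, since the second claim uses exactly the same machinery; the paper's route, by contrast, more directly exhibits that the root's copy is the one surviving copy of the key, which is conceptually illuminating but requires the additional uniqueness invariants. Your handling of the second claim is also slightly cleaner than the paper's: you simply split on whether Actual$(k,\text{root})$ is finite, whereas the paper asserts Actual$(k',\text{root})=+\infty$ without explicitly ruling out the possibility of an entry $(k',q)$ with $q\geq p$ in the root's list. Your observation that the root is never placed in any $M_k$ is also a worthwhile remark the paper leaves implicit.
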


\begin{proof}
The ExtractMin operation returns the entry $(k,p)$ with the smallest priority from the list of the root. We need to prove two statements: $p$ is indeed the correct priority of $k$. $(k,p)$ has the smallest priority among all entries stored in the data structure.

Since the root has no todo buffer, Actual$(k,\text{root})=p$. Using Invariant 1 and 2, we know that for any node $u$ where Actual$(k,u)\neq +\infty$ and any Update$(k,p')$ signal in a signal buffer, there exists a Delete$(k)$ signal in the signal buffers above them. Therefore Final$(k,\text{root})=$ Actual$(k,\text{root})=p$. Using Invariant 3, we know that $p$ is indeed the correct priority of $k$. 

Now suppose an entry $(k',p')$ is stored in the priority queue and $p'<p$. It cannot be stored in the list of the root because if so ExtractMin$()$ would return $(k',p')$. Hence Actual$(k',\text{root})=+\infty$. Using invariant 4 and 5 we also have that no Update$(k',p')$ signal can be stored in the signal buffer of the root because this would imply $p'\geq $ Boundary$(\text{root})\geq p$. Let $c$ be the child of the root that is on the path to Leaf$(k')$. Using Invariant 3 we have Final$(k',\text{root})=p'$, so we must have Final$(k', c)=p'$. Using the corollary of Invariant 4 we know that Final$(k', c)\geq$ Boundary$(\text{root})\geq p$, which leads to contradiction. Hence $p$ is the smallest priority among all entries stored in the data structure. Therefore $(k,p)$ is the correct entry to be returned by the ExtractMin operation.
\end{proof}

\subsection{I/O-complexity} \label{sec:IO}
We first make an useful observation, and then we use the observation to prove the main theorem.
\begin{observation} The height of the tree is $h=O(\log_t \frac{N}{B})=O(\log \frac{N}{B}/\log\log N)$. Each Delete, ExtractMin and Update operation requires no I/Os and each PushSignal, ApplyTodo, EmptyList and FillUp procedure takes $O(t)$ I/Os when excluding the recursive calls to other procedures.
\end{observation}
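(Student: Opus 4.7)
The plan is to verify each claim in the observation directly from the structural parameters set up in Section~\ref{sec:data}, treating the two parts (tree height and per-procedure I/O count) separately.

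For the height, I would simply substitute the parameters. The tree has $\lceil N/(tB)\rceil$ leaves and is a static $t$-ary tree, so $h=\lceil\log_t(N/(tB))\rceil = O(\log_t(N/B))$. Using $t=\log^{0.01}N$, we have $\log t = 0.01 \log\log N$, so $\log_t(N/B) = \log(N/B)/\log t = O(\log(N/B)/\log\log N)$. This is essentially a one-line calculation.

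For the basic operations Delete, ExtractMin, and Update, I would observe from Section~\ref{sec:data} that the root is stored in memory. Each of these operations modifies only the list or signal buffer of the root (inserting a signal, removing an entry, reading the minimum), all of which reside in memory. The additional calls they can trigger (PushSignal, FillUp, EmptyList) are explicitly excluded from the count, so the I/O cost of the operations themselves is $0$.

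For the four procedures, the main observation is that each one accesses only $v$ itself together with ``lightweight'' components of its $t$ children. Each node stores $O(tB)$ words (list of at most $2tB$ entries plus buffers of sizes $O(tB)$ and $O(B)$ plus the Bloom filter replacement of $O(B)$ words), which is $O(t)$ blocks. Each child contributes only a todo buffer of $O(B)$ words and a Bloom filter replacement of $O(B)$ words, i.e., $O(1)$ blocks per child, for a total of $O(t)$ blocks across $t$ children. Hence loading everything needed costs $O(t)$ I/Os. On the output side, in PushSignal at most $tB$ signals are dispatched and each destination buffer is written only when $B$ signals have accumulated, giving $O(t)$ write I/Os; the analogous bookkeeping works for EmptyList (at most $O(tB)$ entries pushed down, batched by $B$) and for FillUp (at most $tB$ entries pulled up from $t$ children, plus one extra round of $B$ reads per emptied child list, so $O(t)$ reads/writes). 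ApplyTodo writes back only $v$ itself, which is again $O(t)$ blocks. In every case the non-recursive I/O is $O(t)$.

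The only mild subtlety, and the one place I would write carefully rather than in a single line, is the I/O amortization for the write side of PushSignal, EmptyList, and FillUp: one must check that buffering $B$ items per destination (buffer or list) before writing indeed keeps the total at $O(t)$, using that there are $t$ destinations and each receives at most $O(tB)$ items across the procedure. This is routine but is the only argument beyond ``add up block counts,'' so I would make it explicit.
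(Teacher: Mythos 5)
Your proposal is correct and follows essentially the same reasoning the paper uses, though the paper leaves this observation unproved in place and instead establishes its pieces inline throughout Section~4 (each procedure description explicitly states ``Loading them costs $O(t)$ I/Os in total,'' and each operation description notes that it touches only the in-memory root). Your explicit check that write-side batching over $O(t)$ destinations with $O(tB)$ total items yields $O(t)$ write I/Os is the right thing to make careful and is consistent with the paper's statement that a block is written once $B$ items accumulate for a given buffer.
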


\begin{theorem}[I/O complexity]\label{thm:IO}
The expected amortized cost for each Delete, Update, ExtractMin operation is $O(\frac{h}{B}+\frac{t}{B})=O(\frac{1}{B}\log \frac{N}{B}/\log\log N)$ I/Os. 
\end{theorem}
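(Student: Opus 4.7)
The plan is a standard amortized credit analysis built on top of the per-call $O(t)$ bound from the preceding Observation. Each Delete, Update, or ExtractMin operation will deposit $\Theta((h+t)/B)$ credits into the data structure: $\Theta(1/B)$ credits per level attached to the one signal (or entry) it introduces at the root, totalling $\Theta(h/B)$, plus an additional $\Theta(t/B)$ reserved for the one todo-buffer emptying this signal will eventually provoke. I then verify that these credits pay for PushSignal, ApplyTodo, EmptyList and FillUp, as well as the extra expected work from Bloom-filter false positives.

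The bookkeeping for PushSignal and ApplyTodo follows the usual buffer-tree pattern. A PushSignal$(v)$ call fires only once the signal buffer of $v$ has overflowed and so contains $\geq tB$ signals; its $O(t)$ I/Os amortize to $O(1/B)$ per signal. Each signal crosses at most one PushSignal per level in the absence of false-positive cycles, so its cumulative PushSignal charge is $O(h/B)$. An ApplyTodo$(v)$ call fires only when the todo buffer has $\geq B$ signals, so its $O(t)$ cost amortizes to $O(t/B)$ per signal; since each signal enters at most one todo buffer in its life, the single $t/B$ earmark suffices. When the branch of PushSignal handling $p\leq$ Boundary$(c)$ converts an Update into an additional internal Delete signal (to clean up duplicates), the Delete's own traversal is paid for identically, contributing a second $O((h+t)/B)$.

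The most delicate part, and the main obstacle I anticipate, is amortizing EmptyList and FillUp together. Each such call costs $O(t)$ I/Os while moving $\Theta(tB)$ entries between adjacent levels, i.e., $O(1/B)$ per entry per move. The subtlety is that an entry pushed down by EmptyList$(v)$ can later be pulled back up by FillUp$(v)$ when $v$ empties and then pushed down yet again, so per-entry movement is not a priori bounded by $h$. My plan is to combine two potentials: $\Phi_1 = (c_1/B)\sum_v (h-d(v))|\text{list}(v)|$, which falls by $\Omega(t)$ on every EmptyList call and rises by at most $O(h/B)$ per Update-induced insertion; and a dual $\Phi_2 = (c_2/B)\sum_v d(v)\cdot R_v$, where $R_v$ counts Delete signals applied at $v$ since its last FillUp$(v)$, which falls by $\Omega(t)$ on every FillUp call and rises by at most $O(h/B)$ over the life of each Delete signal. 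The sum $\Phi_1+\Phi_2$ then absorbs the $O(t)$ cost of every EmptyList and FillUp, while each user operation contributes only $O(h/B)$ in total to the potential.

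Finally, for Bloom-filter errors: each CheckInActual query fails with probability $\varepsilon=1/\log^3 N$, and a single failure routes an Update signal into the wrong todo buffer, costing one extra $O(t/B)$ for the ApplyTodo that detects the mistake plus $O(1/B)$ for re-inserting it into the signal buffer. The expected extra work per signal per level is $O(\varepsilon t/B)=O(\log^{0.01}N/(B\log^3 N))=o(1/B)$, so summed over $h$ levels it contributes only $o(h/B)$, which is absorbed into the main term; the geometric tail from repeated cycling at the same level adds only an $O(1)$ factor since $\varepsilon$ is $o(1)$. Combining all four components gives the claimed expected amortized cost of $O((h+t)/B)=O(B^{-1}\log(N/B)/\log\log N)$ per operation, where the final equality uses the assumption $M>B\log^{0.01}N$ to ensure $t=O(h)$.
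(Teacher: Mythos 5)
Your decomposition into PushSignal, ApplyTodo, EmptyList and FillUp, the $h/B$ bound for PushSignal, the $t/B$ bound for ApplyTodo, and the treatment of Bloom-filter false positives as contributing only an $O(\varepsilon h)$-fraction of extra todo-buffer visits all match the paper's credit argument. The genuine difference is in how you handle the EmptyList/FillUp cycling, and your $\Phi_2$ as stated does not work.

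Two concrete problems. First, you define $R_v$ to count only Delete signals applied at $v$ since its last FillUp. But the list of $v$ can become empty --- triggering FillUp$(v)$ --- entirely because a FillUp at $v$'s parent removed $v$'s entries, with no Delete applied at $v$ at all. In that case $R_v=0$ and $\Phi_2$ drops by nothing, so the recursive FillUp$(v)$ is unpaid. Second, even if you broaden $R_v$ to count every removal, the depth weighting points the wrong way. A FillUp$(v)$ resets $R_v$ (dropping $\Phi_2$ by at most $(c_2/B)\,d(v)\,tB = c_2 t\,d(v)$) while removing $tB$ entries from children at depth $d(v)+1$, which would raise $\sum_c d(c) R_c$ by $(d(v)+1)tB$; the net change is $+c_2 t$, an increase. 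At the root, where $d(\mathrm{root})=0$, $\Phi_2$ is identically zero, so FillUp(root) has no source of credit whatsoever. The paper's argument uses the opposite orientation: a Delete pays $h(v)/B$ credits, where $h(v)$ is the \emph{height} (distance to the leaves) of the node it deletes from, and FillUp$(v)$ pays each child $(h(v)-1)/B$ per moved-up entry. With height weighting the cascade is a strictly decreasing sequence ($h(v) \to h(v)-1 \to \cdots$), which leaves $\Omega(t)$ slack at each level to cover the $O(t)$ I/O cost and the $\Phi_1$ increase you correctly identify. Replacing $d(v)$ with $h-d(v)$ in your $\Phi_2$, and letting $R_v$ count all removals (Deletes and removals caused by a parent FillUp, with the parent's FillUp explicitly crediting the child), recovers an argument equivalent to the paper's.
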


\begin{proof}
A Delete operation and an ExtractMin operation each generates at most one Delete signal. An Update operation generates at most one Update signal and one Delete signal. So each operation can generate at most 2 signals to be propagated down. Thus the expected amortized cost of an operation is at most 2 times the expected amortized cost of a signal. 
We prove the theorem using a standard credit argument. 
Each node charges the signals that go into its buffers some credits, and when performing a procedure on that node, it has accumulated enough credits to pay for the required I/Os. In the end, we sum up the total number of credits that a signal is charged and this is the amortized cost of that signal. Now we consider the procedures one by one.

\noindent \textbf{PushSignal:} Each signal is charged $\frac{1}{B}$ credits by a node $v$ when it goes into the signal buffer of $v$. Since each signal goes into at most one signal buffer at each level, it is charged at most $\frac{h}{B}$ credits. When the procedure PushSignal$(v)$ is called, $v$ has already accumulated $tB\cdot \frac{1}{B}=t$ credits to pay for the $O(t)$ cost of this procedure.

\noindent \textbf{ApplyTodo:} Each signal is charged $\frac{t}{B}$ credits by a node $v$ when it goes into the todo buffer of $v$. When the procedure ApplyTodo$(v)$ is called, $v$ has already accumulated $B\cdot \frac{t}{B}=t$ credits to pay for the $O(t)$ cost of this procedure. Each signal goes into the todo buffer of the correct node that the signal should be applied to, and it may wrongly go into the todo buffer when the Bloom filter replacement of that node makes a false positive error. In expectation, each signal goes into $1+\varepsilon h<2$ todo buffers. So in expectation, it is charged at most $\frac{2t}{B}$ credits.

\noindent \textbf{EmptyList:} Each Update$(k,p)$ signal is charged $\frac{1}{B}$ credits by a node $v$ when the entry $(k,p)$ is inserted into the list of $v$. We also let $v$ repay these credits back to the Update$(k,p)$ signal when the entry $(k,p)$ is moved up by a FillUp procedure. Each Update$(k,p)$ signal is only charged these credits when it moves downward in the tree, so in total it is only charged $\frac{h}{B}$ credits. When the procedure EmptyList$(v)$ is called, $v$ has already accumulated $tB\cdot \frac{1}{B}=t$ credits to pay for the $O(t)$ cost of this procedure. Note that EmptyList first calls ApplyTodo and PushSignal, so it also needs to pay the $O(t)$ cost for ApplyTodo and PushSignal.

\noindent \textbf{FillUp:} When a Delete signal deletes its target entry from a node $v$ of height $h(v)$, it pays $\frac{h(v)}{B}$ credits to $v$. Since $h(v)\leq h$, a Delete signal is charged at most $\frac{h}{B}$ credits. When FillUp$(v)$ is called, the node $v$ pays $\frac{h(v)-1}{B}$ credits to a child for each entry that is moved up from the list of that child. The node $v$ also pays $t$ credits for the cost of this FillUp procedure. Since $tB$ entries are moved up to $v$, in total $v$ pays $tB\cdot \frac{h(v)-1}{B}+t=th(v)$ credits. $v$ has accumulated enough credits because when FillUp$(v)$ is called, $v$ has lost $tB$ entries from its list, and each depletion is paid $\frac{h(v)}{B}$ credits, either by a Delete signal or by the parent of $v$ during a FillUp procedure.
Note that EmptyList first calls ApplyTodo and PushSignal, so it also needs to pay the $O(t)$ cost for ApplyTodo and PushSignal.

Summing up, the expected amortized cost of each signal is $O(\frac{h}{B}+\frac{t}{B})=O(\frac{1}{B}\log \frac{N}{B}/\log\log N)$. Hence the amortized cost of each operation is also $O(\frac{1}{B}\log \frac{N}{B}/\log\log N)$.
\end{proof}

\section{Conclusion and Discussion}
In this paper we proposed an external memory priority queue that supports all operations in expected amortized $O(\frac{1}{B}\log \frac{N}{B}/\log\log N)$ I/Os. There still exists a gap between our data structure and the lower bound of $\Omega(\frac{1}{B}\log B/\log\log N)$. The lower bound has a $\log B$ numerator that is independent of $N$, and it would be exciting if one can design a matching priority queue. 
We also note that the lower bound of \cite{MR3678253} is proved in a setting where the data structure only supports the Delete($k)$ operation, and the priorities in the hard distribution are integers smaller than $\log N$. In the more general setting as that of our data structure, there might exist a tighter lower bound.

\bibliographystyle{abbrv}
\bibliography{ref}
\end{document}